\newcommand{\copyrightstatement}{
\begin{textblock}{0.84}(0.08,0.95)    
\centering 978-1-5386-2098-4/17/\$31.00 \copyright 2017 IEEE.
\end{textblock}
}
\newcommand{\e}{\mathrm{e}}
\newcommand{\erf}{\mathrm{erf}}
\begin{document}

\copyrightstatement

\title{ Performance of Analog Nonlinear Filtering for Impulsive Noise Mitigation in OFDM-based PLC Systems}

\author{
\author{Reza Barazideh$^{\dag}$, Balasubramaniam Natarajan$^{\dag}$, Alexei V. Nikitin$^{\dag,\ast}$, Ruslan L. Davidchack $^{\ddag}$ \\
\small $^{\dag}$ Department of Electrical and Computer Engineering, Kansas State University, Manhattan, KS, USA.\\
$^{\ast}$ Nonlinear Corp., Wamego, KS 66547, USA.\\
$^{\ddag}$  Dept. of Mathematics, U. of Leicester, Leicester, LE1 7RH, UK.\\
Email:\{rezabarazideh,bala\}@ksu.edu, avn@nonlinearcorp.com, rld8@leicester.ac.uk}
}

\maketitle

\begin{abstract}
Asynchronous and cyclostationary impulsive noise can severely impact the bit-error-rate (BER) of OFDM-based powerline communication systems. In this paper, we analyze an adaptive nonlinear analog front end filter that mitigates various types of impulsive noise without detrimental effects such as self-interference and out-of-band power leakage caused by other nonlinear approaches like clipping and blanking. Our proposed Adaptive Nonlinear Differential Limiter (ANDL) is constructed from a linear analog filter by applying a feedback-based nonlinearity, controlled by a single resolution parameter. We present a simple practical method to find the value of this resolution parameter that ensures the mitigation of impulsive without impacting the desired OFDM signal. Unlike many prior approaches for impulsive noise mitigation that assume a statistical noise model, ANDL is blind to the exact nature of the noise distribution, and is designed to be fully compatible with existing linear front end filters. We demonstrate the potency of ANDL by simulating the OFDM-based narrowband PLC compliant with the IEEE standards. We show that the proposed ANDL outperforms other approaches in reducing the BER in impulsive noise environments.

\end{abstract}

\begin{IEEEkeywords}
Impulsive noise, analog nonlinear filter, adaptive nonlinear differential limiter (ANDL), orthogonal frequency-division multiplexing (OFDM), powerline communication (PLC).
\end{IEEEkeywords}

\section{Introduction}
Smart Grid is a concept that enables wide-area monitoring, two-way communications, and fault detection in power grids, by exploiting multiple types of communications technologies, ranging from wireless to wireline \cite{Galli2011ForTheGrid}. Thanks to the ubiquitousness of powerline infrastructure, low deployment costs, and its wide frequency band, powerline communication (PLC) has become a choice for a variety of smart grid applications~\cite{Lin13impulsive_SparseBayesian}.
In particular, there has been increasing demand in developing narrowband PLC (NB-PLC) systems in the 3-–500 kHz band, offering data rates up to 800 kbps \cite{Galli2011ForTheGrid,NikitinISPLC15}. In order to achieve such a data rates, multicarrier modulation techniques such as orthogonal frequency division multiplexing (OFDM) are preferred due to their robust performance in frequency-selective channels \cite{Zhidkovn08_Simpleanalysis}.
Since the powerline infrastructure is originally designed for power delivery and not for data communications \cite{NikitinISPLC15}, OFDM-based PLC solutions face many challenges such as noise, impedance mismatching and attenuation. Powerline noise typically generated by electrical devices connected to the powerlines and coupled to the grid via conduction and radiation is a major issue in PLC \cite{lin2013non}. Due to its technogenic (man-made) nature, this noise is typically non-Gaussian and impulsive, as has been verified by field measurements. Therefore, PLC noise can be modelled as combination of two terms: thermal noise which is assumed to be additive white Gaussian noise (AWGN), and the impulsive noise that may be synchronous or asynchronous relative to the main frequency \cite{Zimmermann02analysis}. It is observed that the primary noise component in broadband PLC (BB-PLC) \cite{Zimmermann2000analysis,nassar2011} is asynchronous and impulsive with short duration, i.e., high power impulses (up to 50 dB above thermal noise power \cite{Zimmermann2000analysis}) with random arrivals. In \cite{Nassar12cyclostationary} and IEEE P1901.2 standard \cite{Standard}, it is shown that in NB-PLC, the dominant non-Gaussian noise is a “quasi-periodic” impulsive noise ({\em cyclostationary noise}). Such noise occurs periodically with half the AC (Alternating Current) cycle with the duration ranging from hundreds of microseconds to a few milliseconds. However, it has been also claimed that asynchronous impulsive noise is simultaneously present in the higher frequency bands of NB-PLC \cite{Lin13impulsive_SparseBayesian}, \cite{NikitinISPLC15}.

The reduction in sub-channel signal-to-noise ratio (SNR) in highly impulsive noise environments such as PLC can be too severe to handle by forward error correction (FEC) and frequency-domain block interleaving (FDI) \cite{Nassar12local}, or time-domain block interleaving (TDI) \cite{Time_Interleaving}. Various approaches to deal with impulsive noise in OFDM have been proposed in prior works. Many of those approaches assume a statistical model of the impulsive noise and use parametric methods in the receiver to mitigate impulsive noise. Considering a specific statistical noise model, one can design a periodically switching moving average noise whitening filter \cite{Lin12cyclostationary}, linear minimum mean square error (MMSE) equalizer in frequency domain \cite{Yoo08asymptotic} or iterative decoder \cite{Haring-2003-Iterative-decoding} to mitigate cyclostationary noise. Such parametric methods require the overhead of training and parameter estimation. In addition, difficulty in parameter estimation and model mismatch degrade the system performance in time varying non stationary noise.

Alternately, nonlinear approaches can be implemented in order to suppress the effect of impulsive noise. The performance of memoryless digital nonlinear methods such as clipping \cite{Tseng-2012-robust-clipping}, blanking \cite{Blanking}, and combined blanking-clipping \cite{Blanking-Clipping} have been investigated in prior literature. It has been shown that for these methods, good performance is achieved only for asynchronous impulsive noise, and for high signal-to-interference ratios (SIR) \cite{Zhidkovn08_Simpleanalysis}. To address the challenge of severe impulsive noise conditions, a two-stage nulling algorithm based on iterative channel estimation is proposed in \cite{Two_Stage_Iterative}.
However, all these digital nonlinear approaches are implemented after the analog-to-digital convertor (ADC). The main drawback of these approaches lies in the fact that during the process of analog-to-digital conversion, the signal bandwidth is reduced and an initially impulsive broadband noise will appear less impulsive \cite{Nikitin-2011b-EURASIP}-\cite{Nikitin2015}. This makes the removal of impulsivity much harder by digital filters. Although, such problems can be overcome by increasing the sampling rate, it increases complexity and cost making it inefficient for real-time implementation \cite{NikitinISPLC15}, \cite{Nikitin04adaptive_rank-filter}.

In this work, unlike other prior approaches we mitigate impulsive noise in the analog domain before the ADC by using a blind adaptive analog nonlinear filter, referred to as Adaptive Nonlinear Differential Limiter (ANDL). In this technique, the adaptation is done by adjusting a single resolution parameter to work efficiently in the presence of various types of impulsive noise (asynchronous and cyclostationary impulsive noise, or combination of both) without the detailed knowledge of the noise distribution. Since ANDL is nonlinear, their effects on the desired signal are totally different than on the impulsive noise. This feature allows the filter to increase the signal to noise ratio (SNR) in the desired bandwidth by reducing the spectral density of non-Gaussian noise without significantly affecting the desired signal. Analog structure of this method allows us to use ANDL either as a stand-alone approach, or in combination with other digital impulsive noise reduction approaches. Our preliminary work in \cite{NikitinISPLC15} highlighted the basics of the ANDL approach, and its results were limited to the study of general behavior of SNR in a conceptual system without realistic OFDM transmitter and receiver modules. In this paper, we extend the analysis by explicitly qualifying the bit error rate (BER) performance of a practical OFDM-based PLC system. Additionally, unlike \cite{NikitinISPLC15}, we illustrate the performance gains offered by ANDL relative to other conventional approaches such as blanking and linear filtering. Finally, for the first time, we present a simple method to determine an effective value for the resolution parameter that maximizes signal quality while mitigating the impulsive noise.


The remainder of this paper is organized as follows. Section \ref{sec:System and Noise Models} describes the considered system and noise models. Proposed ANDL approach including resolution parameter calculation is described in section III. Section IV presents simulation results and finally conclusions are drawn in Section V.

\section{System and Noise Models}\label{sec:System and Noise Models}

We consider an OFDM system with complex baseband equivalent representation shown in Fig. \ref{fig:System Model}. In this system, information bits are independently and uniformly generated and mapped into baseband symbols $s_k$ based on phase shift keying (PSK) or quadrature amplitude modulation (QAM) scheme with Gray coding. The symbols $s_k$ are sent through an OFDM modulator which employs an inverse discrete Fourier transform (IDFT) to transmit the symbols over orthogonal subcarriers. The output analog signal envelope in time domain can be written as

\begin{figure*}
\centering
\includegraphics[scale=.44]{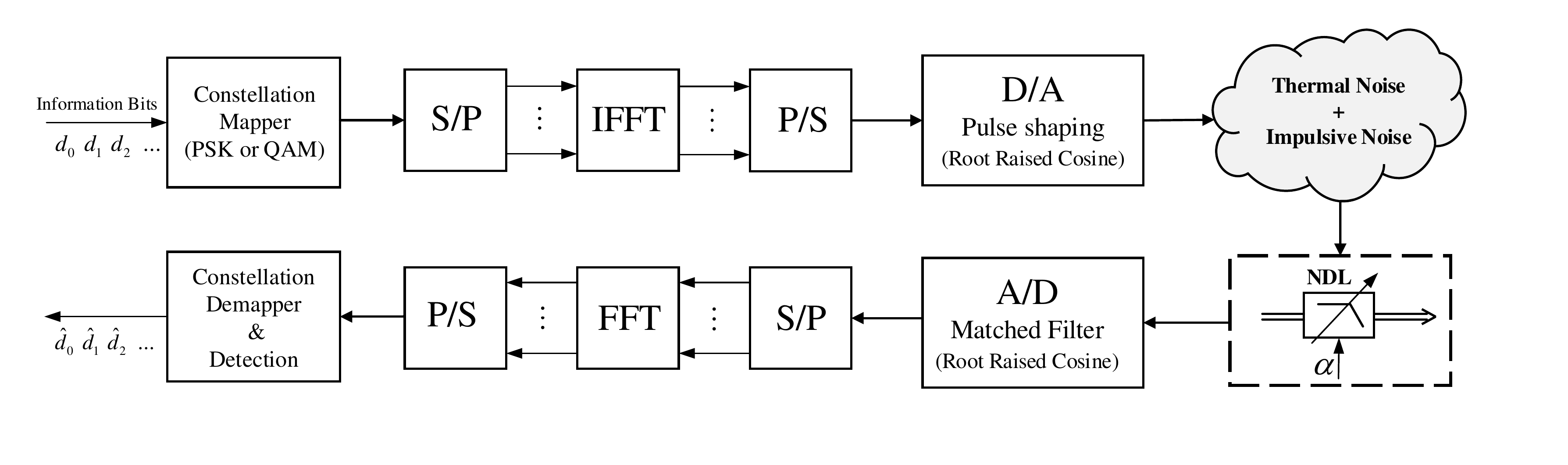}
\caption{System Model}
\label{fig:System Model}
\end{figure*}
\begin{equation}
s(t) = \frac{1}{{\sqrt N }}\sum\limits_{k = 0}^{N - 1} {{s_k}}\,{\e^{j\frac{{2\pi kt}}{{{T}}}}}p(t),\,\,\,\,\,0 < t < {T},
\end{equation}
where $N$ is the number of subcarriers, $T$ is duration of one OFDM symbol and $p(t)$ denotes the root-raised-cosine pulse shape with roll-off factor $0.25$. It is assumed that the number of subcarriers is large enough so that Central Limit Theorem (CLT) can be invoked to show that the real and imaginary parts of the OFDM signal $s(t)$ can be modeled as Gaussian random variables. In general, for different applications, we can construct an OFDM symbol with $M$ non-data subcarriers and $N-M$ data subcariers. The non-data subcarriers are either pilots for channel estimation and synchronization, or nulled for spectral shaping and inter-carrier interference reduction. Without loss of generality, the power of transmitted signal is normalized to unity, i.e., $\sigma _s^2 = 1$. Since the primary focus of this work is to study the impact of impulsive noise on OFDM performance, we consider a simple additive noise channel model where the received signal corresponds to
\begin{equation} \label{recived signal}
r(t) = s(t) + w(t) + i(t).
\end{equation}
Here, $s(t)$ denotes the desired signal with variance $\sigma _s^2$, $w(t)$ is complex Gaussian noise with mean zero and variance $\sigma _w^2$, and $i(t)$ represents the impulsive noise which is not Gaussian. The receiver involves a typical OFDM demodulator as shown in Fig. \ref{fig:System Model}. This traditional receiver structure is modified in order to deal with impulsive noise $i(t)$ as briefly discussed in the introduction section. Unlike most conventional impulsive noise mitigation approaches which are applied after the ADC, the proposed ANDL is implemented before the ADC. In the following, we begin with a review of the impulse noise models commonly encountered in PLC systems.

\subsection{Impulsive Noise Models}
Two types of impulsive noise that are dominant in the 3--500 KHz band for NB-PLC and in the 1.8--250 MHz band for BB-PLC are cyclostationary impulsive noise, and asynchronous impulsive noise, respectively \cite{Lin13impulsive_SparseBayesian}. Since both types of impulsive noises are presented in the NB-PLC \cite{Lin13impulsive_SparseBayesian}, \cite{NikitinISPLC15}, our impulsive noise model consists of both cyclostationary and asynchronous impulsive noises.

\subsubsection{Cyclostationary impulsive noise}

This type of impulsive noise has a duration ranging from hundreds of microseconds to a few milliseconds \cite{Lin13impulsive_SparseBayesian}, \cite{NikitinISPLC15}. Based on field measurements \cite{Standard}, the dominant part of this noise is a strong and narrow exponentially decaying noise burst that occurs periodically with half the AC cycle ($f_{\rm AC}=60Hz$). Therefore, we can model such noise as
\begin{equation}\label{CS}
i_{\rm cs}(t) = A_{\rm cs}\, \nu (t) \sum\limits_{k = 1}^\infty  \exp \left(\! \frac{ - t \!+\! \frac{k}{2f_{\rm AC}}}{\tau _{\rm cs}} \!\!\right) \theta \left(t \!-\! \frac{k}{2f_{\rm AC}} \right),
\end{equation}
where $A_{cs}$ is a constant, $\tau_{cs}$ is decaying time parameter, $\nu(t)$ is complex white Gaussian noise process with zero mean and variance one, and $\theta (t)$ is Heaviside step function. The spectral density of this noise is shaped based on measured spectrum of impulsivity in practice (power spectrum density (PSD) decaying at an approximate rate of 30 dB per 1 MHz) \cite{Standard}. The resulting time domain and frequency domain representation of this noise is depicted in Fig. \ref{fig:Cyclostationary}.

\subsubsection{Asynchronous impulsive noise}

This type of impulsive noise consists of short duration and high power impulses with random arrival. Mathematically, we have
\begin{equation}\label{AS}
i_{\rm as}(t) = \nu (t) {\sum\limits_{k = 1}^\infty  {A_k}\, \theta (t - {t_k})\, \e^{\frac{{ - t + {t_k}}}{\tau _{\rm as}}}}\,,
\end{equation}
where $A_k$ is the amplitude of ${k^{th}}$ pulse, $t_k$ is a arrival time of a poisson process with parameter $\lambda$, and  $\tau_{as}$ is decaying time parameter and has a duration about few microseconds. The time domain and frequency domain representation of this noise is depicted in Fig. \ref{fig:Asynchronous impulsive noise}.

\begin{figure}[t]
\centering
\includegraphics[width=.5\textwidth,height=50mm]{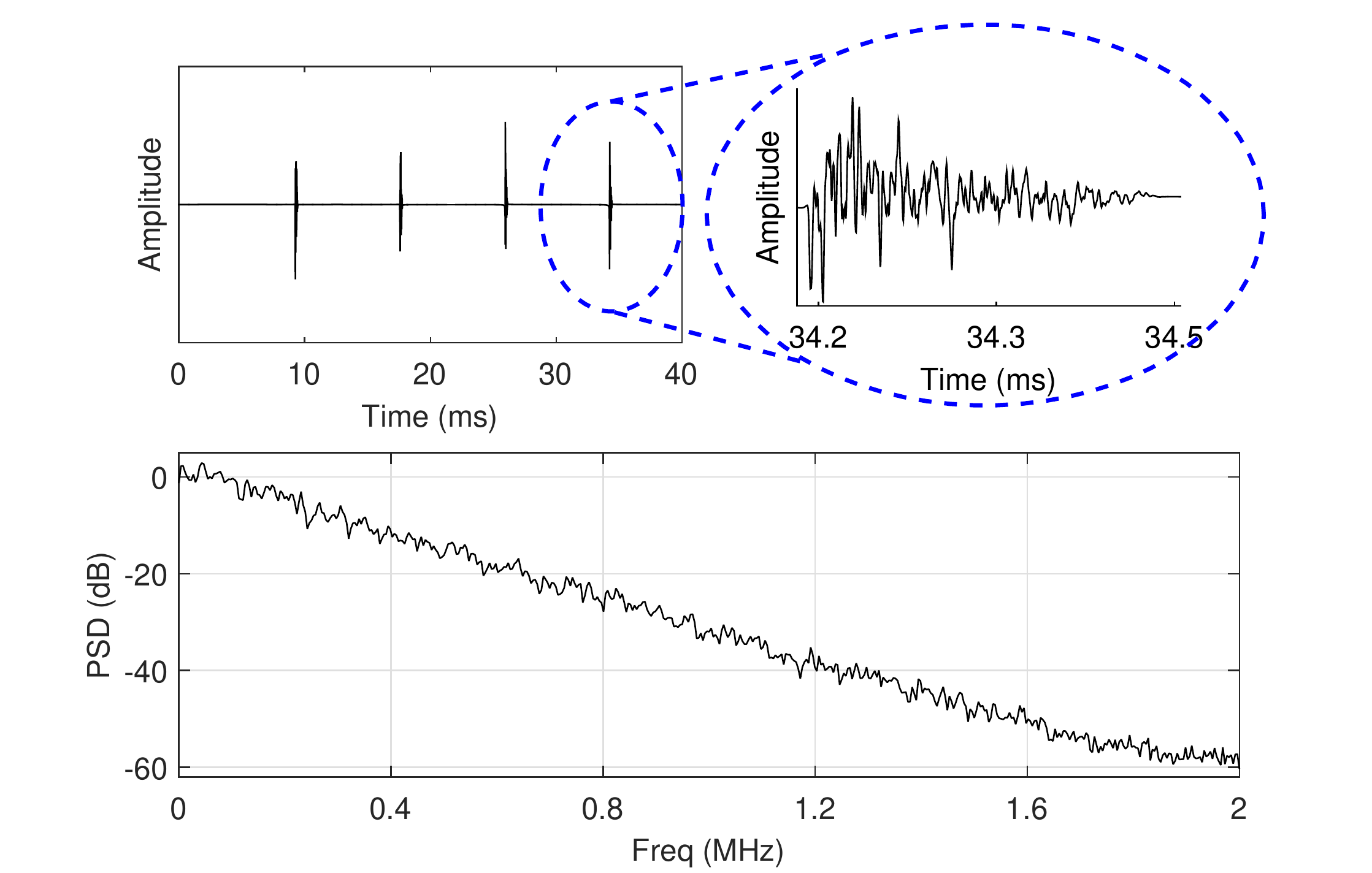}
\caption{Cyclostationary impulsive noise}
\label{fig:Cyclostationary}
\end{figure}

\begin{figure}[t]
\centering
\includegraphics[width=.5\textwidth,height=50mm]{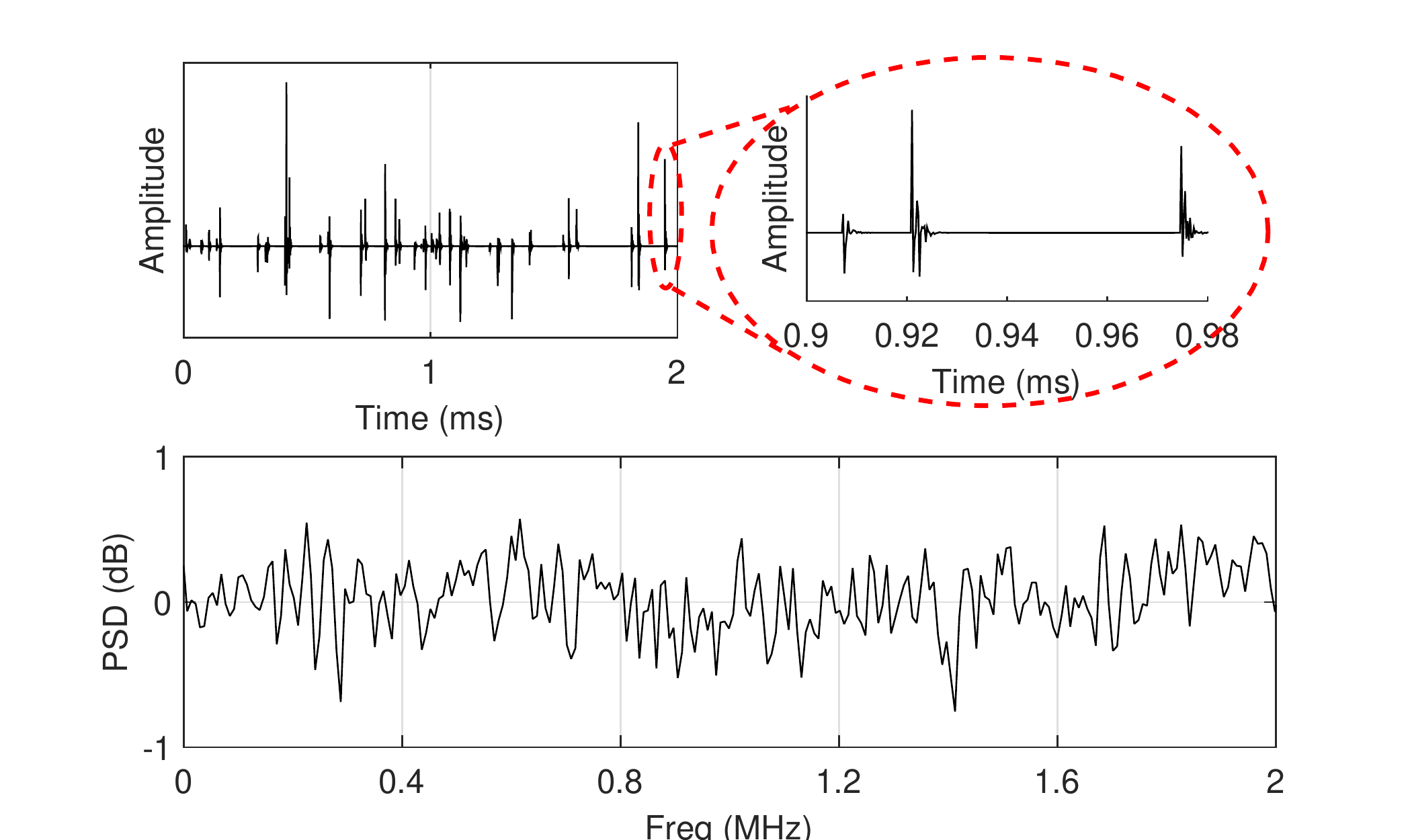}
\caption{Asynchronous impulsive noise}
\label{fig:Asynchronous impulsive noise}
\end{figure}

\section{ANDL Design}\label{sec: NDL Basics}

In this section, we provide an introduction to the basics of the ANDL and the method that can be used to find an effective value for the resolution parameter of the filter to mitigate impulsive noise.

\subsection{ANDL Formulation}


ANDL is a blind adaptive analog nonlinear filter that can be perceived as a 1st order time varying linear filter with the time parameter $\tau(t)$, that depends on the magnitude of the difference between the input and the output, as discussed in our previous work \cite{NikitinISPLC15,nikitin2014method}. Thus, we have

\begin{equation} \label{eq:1st order CDL}
  \chi(t) = x(t) - \tau(|x(t)-\chi(t)|)\, \dot{\chi}(t)\,,
\end{equation}
 where $x(t)$ and $\chi(t)$ are the input and output of the filter, respectively, and the dot denotes the first time derivative. As illustrated in Fig.~\ref{fig:Tau}, the time parameter~${\tau(t)= \tau(|x(t)-\chi(t)|)}$ is given by
\begin{equation} \label{eq:CDL tau}
  \tau(|x(t)-\chi(t)|)  = \tau_0 \times \left\{
  \begin{array}{cc}
    \!\! 1 &|x(t)-\chi(t)| \le \alpha(t)\\
    \!\!\frac{|x(t)-\chi(t)|}{\alpha(t)} & \mbox{otherwise}
  \end{array}\right.,
\end{equation}
where $\tau_0$ is a fixed time constant that ensures the desired bandwidth and $\alpha(t)$ is the resolution parameter of the filter and should be determined to mitigate the impulsive noise efficiently. Although in general the ANDL is a nonlinear filter, it behaves like a linear filter as long as there are no outliers and the magnitude of the difference signal $\left| {x(t) - \chi(t)} \right|$ remains within a certain range determined by the resolution parameter. However, when outliers are encountered, the proper selection of resolution parameter ensures that the magnitude of the corresponding outliers are suppressed by the nonlinear response of the ANDL.


\noindent
\begin{figure}[t]
\centering
\includegraphics[width=.45\textwidth,height=35mm]{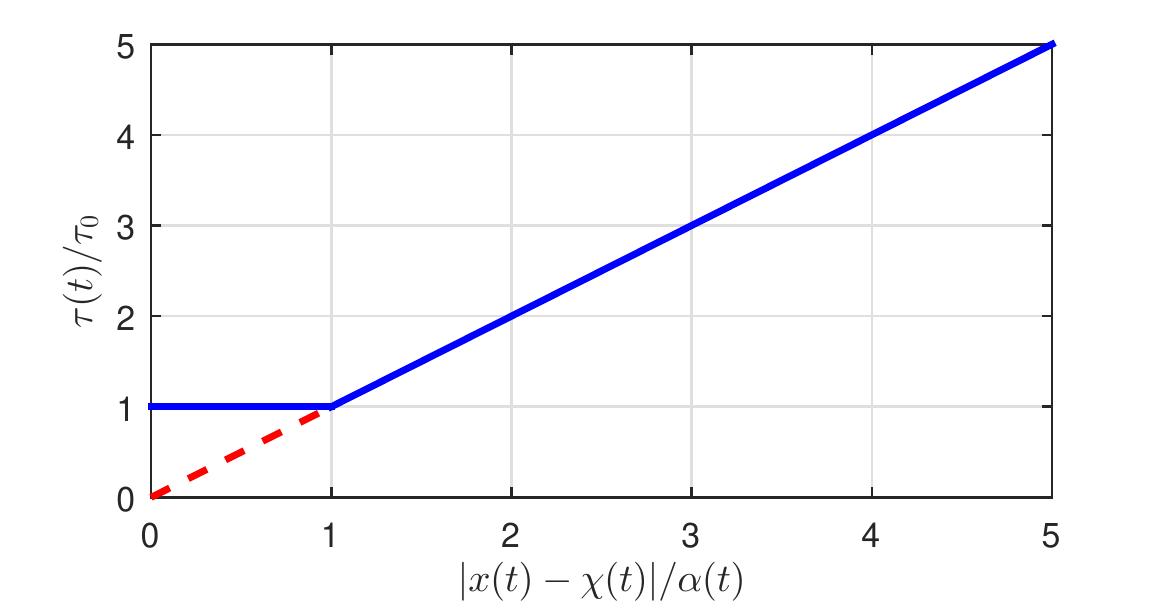}
\caption{ANDL time parameter ${\tau(t)= \tau(|x(t)-\chi(t)|)}$.}
\label{fig:Tau}
\end{figure}

\subsection{Resolution Parameter Calculation}

The configuration of the ANDL \cite{nikitin2014method} consists of a feedback mechanism that monitors the peakedness of the signal plus noise mixture and provides a time-dependent resolution parameter $\alpha(t)$ which ensures improvement in the quality of non-stationary signals under time-varying noise conditions. The idea is to pick an effective value of $\alpha(t)$ that allows the signal of interest to completely go through the nonlinear filter without any suppression and at the same time mitigate the impulsive noise, maximally. For implementation simplicity, we assume that SNR variations are slower relative to the OFDM symbol duration. Therefore, we can fix the resolution parameter for each OFDM symbol duration $\alpha(t){=}\alpha$ and allow it to change across symbols. The lower bound of the resolution parameter can be found based on difference signal $\left| {x(t) - \chi(t)} \right|$ in case of no impulsive noise. An estimate of the difference signal can be obtained by passing signal $s(t)+w(t)$ through a linear highpass filter. Let $z(t)$ be given by a differential equation for the 1st~order highpass filter with the time constant~$\tau_0$. Then, we have

\begin{equation} \label{eq:1st high pass}
  z(t)=\tau_0 \left[ \dot{s}(t)+\dot{w}(t)-\dot{z}(t) \right],
\end{equation}
Lemma~\ref{lem:lemma1} provides a lower bound for the choice of resolution parameter $\alpha$.

\newtheorem{lemma}{Lemma}
\begin{lemma}\label{lem:lemma1}
The efficient value of the resolution parameter ${\alpha_{\rm eff,\epsilon}}$ for $(1-\varepsilon)$ level distortionless filtering of the transmitted OFDM signal in thermal noise is $\erf^{-1}(1-\varepsilon)\sqrt{2}\sigma_z$, where $\sigma _z^2$ is the variance of $z(t)$ and $\varepsilon$ is a sufficiently small constant.
\end{lemma}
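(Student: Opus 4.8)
The plan is to recognize that, with only thermal noise present, the signal $z(t)$ of \eqref{eq:1st high pass} is precisely the difference signal $x(t)-\chi(t)$ that drives the ANDL nonlinearity, that $z(t)$ is a zero-mean Gaussian variable at every instant, and that ``$(1-\varepsilon)$ level distortionless filtering'' therefore amounts to requiring the linear regime $|z(t)|\le\alpha$ to hold with probability $1-\varepsilon$; inverting the resulting Gaussian-tail condition produces $\alpha_{\rm eff,\epsilon}$.

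First I would establish Gaussianity of $z(t)$. By the CLT argument invoked in \cref{sec:System and Noise Models}, the in-phase and quadrature components of $s(t)$ are (to that approximation) zero-mean Gaussian, and $w(t)$ is zero-mean Gaussian by hypothesis, so $s(t)+w(t)$ is a zero-mean Gaussian process. Equation~\eqref{eq:1st high pass} realizes $z(t)$ as the output of a linear time-invariant first-order highpass filter (transfer function $H(\omega)=j\omega\tau_0/(1+j\omega\tau_0)$) driven by $s(t)+w(t)$; since linear filtering maps Gaussian processes to Gaussian processes, $z(t)$ is zero-mean Gaussian with a finite variance $\sigma_z^2$ fixed by $H(\omega)$ and the input PSD. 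Working per quadrature component, $z(t)\sim\mathcal N(0,\sigma_z^2)$ at each fixed $t$.

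Next I would tie $z(t)$ to the ANDL. In the absence of outliers the ANDL stays in its linear regime $\tau(|x-\chi|)=\tau_0$, so \eqref{eq:1st order CDL} collapses to $\chi(t)=x(t)-\tau_0\dot\chi(t)$; writing $d(t)=x(t)-\chi(t)=\tau_0\dot\chi(t)$ and differentiating gives $d(t)=\tau_0[\dot x(t)-\dot d(t)]$, which with $x(t)=s(t)+w(t)$ is exactly \eqref{eq:1st high pass}, hence $d\equiv z$. Thus $z(t)$ is precisely the quantity that must remain inside $[-\alpha,\alpha]$ for the OFDM signal to pass undistorted, and ``$(1-\varepsilon)$ level distortionless filtering'' is the requirement $\Pr\{|z(t)|\le\alpha\}=1-\varepsilon$. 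For $z(t)\sim\mathcal N(0,\sigma_z^2)$,
\[
\Pr\{|z(t)|\le\alpha\}=\erf\!\left(\frac{\alpha}{\sqrt2\,\sigma_z}\right),
\]
so setting this equal to $1-\varepsilon$ and solving for $\alpha$ yields $\alpha=\sqrt2\,\sigma_z\,\erf^{-1}(1-\varepsilon)=\alpha_{\rm eff,\epsilon}$; any smaller $\alpha$ would clip the OFDM signal more than a fraction $\varepsilon$ of the time, which is why this value also serves as the lower bound asserted in the surrounding text.

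The main obstacle is the step that reduces the nonlinear ANDL to the linear highpass filter: one must argue carefully that ``in thermal noise only'' the difference $|x(t)-\chi(t)|$ does not (or, at the tolerated $\varepsilon$ level, essentially never does) leave the band where $\tau\equiv\tau_0$, so that \eqref{eq:1st order CDL} genuinely collapses to \eqref{eq:1st high pass} — this is a self-consistency property of the chosen $\alpha$ rather than an unconditional fact. The remaining ingredients — preservation of Gaussianity under linear filtering, the error-function form of the Gaussian CDF, and its inversion — are standard. A minor bookkeeping point is the complex baseband: the threshold must be read per I/Q component (equivalently, $\sigma_z^2$ is the per-component variance), since otherwise $|z(t)|$ would follow a Rayleigh rather than folded-Gaussian law and the error function would not appear.
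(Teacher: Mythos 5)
Your proposal is correct and follows essentially the same route as the paper's proof: invoke the CLT/Gaussianity of $z(t)$, identify $|z(t)|\le\alpha$ with the ANDL's linear regime, impose $\Pr(|z(t)|>\alpha)\le\varepsilon$, and invert the error function. Your explicit derivation showing that the linear-regime difference signal $x(t)-\chi(t)$ satisfies \eqref{eq:1st high pass} (and hence equals $z(t)$), together with the remark about reading the threshold per I/Q component, fills in steps the paper only asserts implicitly, but it is an elaboration of the same argument rather than a different one.
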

\begin{proof}

Since $s(t)$ and $w(t)$ are independent, for a sufficiently large $N$ it follows from the CLT that $z(t)$ is a complex Gaussian random variable with zero mean and variance $\sigma _z^2$.
From equations \eqref{eq:1st order CDL} and \eqref{eq:CDL tau},
the ANDL preserves its linear behavior for~$|z(t)|\le\alpha$.
Therefore, for $(1-\varepsilon)$ distortionless filtering of the transmitted OFDM signal in thermal noise, we require that
\begin{equation}
\Pr {\mkern 1mu} (\left| {z(t)} \right|{\mkern 1mu}  > {\mkern 1mu} \alpha ){\mkern 1mu} \le\varepsilon\ll 1.
\end{equation}
Since $z(t)$ is Gaussian, we have
\begin{equation} \label{eq:erf}
\Pr (\left| {z(t)} \right| > \alpha ) = 1 - \erf\left(\frac{\alpha }{\sigma_z\sqrt{2}}\right) \le\varepsilon,
\end{equation}
where $\erf(.)$ is the error function.
Solving equation \eqref{eq:erf} with respect to $\alpha$, we obtain
\begin{equation} \label{eq:erfinv}
\alpha_{\rm eff,\epsilon} \ge \erf^{-1}(1-\varepsilon)\sqrt{2} \sigma_z,
\end{equation}

\end{proof}

In practice, a choice of~${\varepsilon=4.68\times 10^{-3}}$ leads to ${\alpha\ge 2\sqrt{2}\, \sigma_z}$, i.e., ${\alpha_{\rm eff}=2\sqrt{2}\, \sigma_z}$ and we use sample variance instead of statistical variance $\sigma _z^2$ as it can be computed online and can track possible nonstationary behavior.


\section{Simulation results}

In this section, as a specific example we consider an OFDM-based NB-PLC in PRIME. Based on IEEE P1901.2 standard \cite{Standard} the sampling frequency has been chosen as $f_s=250$ kHz and the FFT size is $N=512$, i.e, the subcarrier spacing $∆f=488$ Hz. As carriers $N=86–-182$ are occupied for data transmission based on the PRIME model, the desired signal is located in the frequency range 42–-89 kHz \cite{Prime-G3}.
The system is investigated in a noise environment that is typical for NB-PLC and it consists of three components (1) thermal noise (with PSD decaying at rate of 30 dB per 1 MHz) (2) periodic cyclostationary exponentially decaying component with the repetition frequency at twice the AC line frequency and duration ranging from hundreds of microseconds to a few milliseconds, and (3) asynchronous random impulsive noise with normally distributed amplitudes captured by a poisson arrival process with parameter $\lambda$.

We use first order ANDL, with ${\tau _0} {=} 1/(2\pi {f_0})$ and corner frequency ${f_0} {=} 2 \times 89$ kHz, which is followed by a 2nd order linear filter with the time parameter $\tau=\tau_0$ and the quality factor $Q=1$. It is important to note that in the considered system model, the matched filter can take the role of the linear filter. When $\alpha  \to \infty $ this ANDL becomes a 3rd order Butterworth filter with cutoff frequency twice the highest frequency of the desired signal.
All simulations have been performed for BPSK modulation and the cyclostationary impulsive noise is simulated as a damped sinusoid based on equation \eqref{CS} and it lasts for $200 \mu s$ (one tenth of OFDM symbol). The asynchronous impulsive noise is added to the transmitted signal with different probability of impulsivity based on equation \eqref{AS} which lasts for $2 \mu s$. Since the cyclostationary noise is dominant in the NB-PLC, we set the power of this component three times higher than the asynchronous impulsive noise. We mimic the analog domain by oversampling the transmitted OFDM signal by factor 40 and downsampling after ANDL. In the following, BER of the OFDM system is used as the metric to evaluate the performance of ANDL in comparison with other conventional approaches such as linear filtering and blanking. Since, the noise is essentially stationary in the system, we can pick the effective $\alpha$ based on lemma \ref{lem:lemma1} for a fixed SNR leading to a classic ANDL implementation.
\begin{figure}[t]
\centering
\includegraphics[width=.5\textwidth,height=60mm]{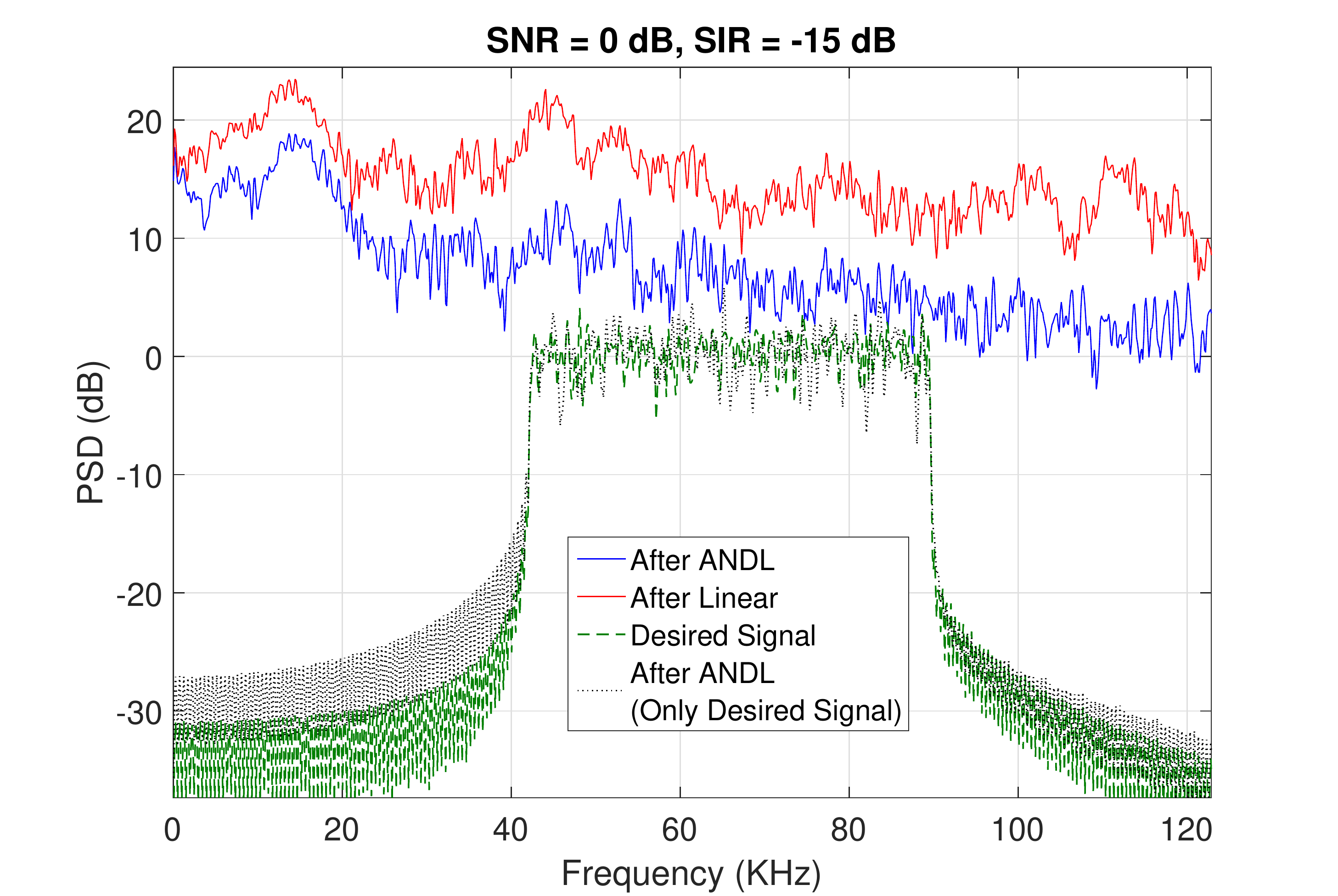}
\caption{Power Spectral Density.}
\label{fig:PSD}
\end{figure}
\begin{figure}[t]
\centering
\includegraphics[width=.5\textwidth,height=60mm]{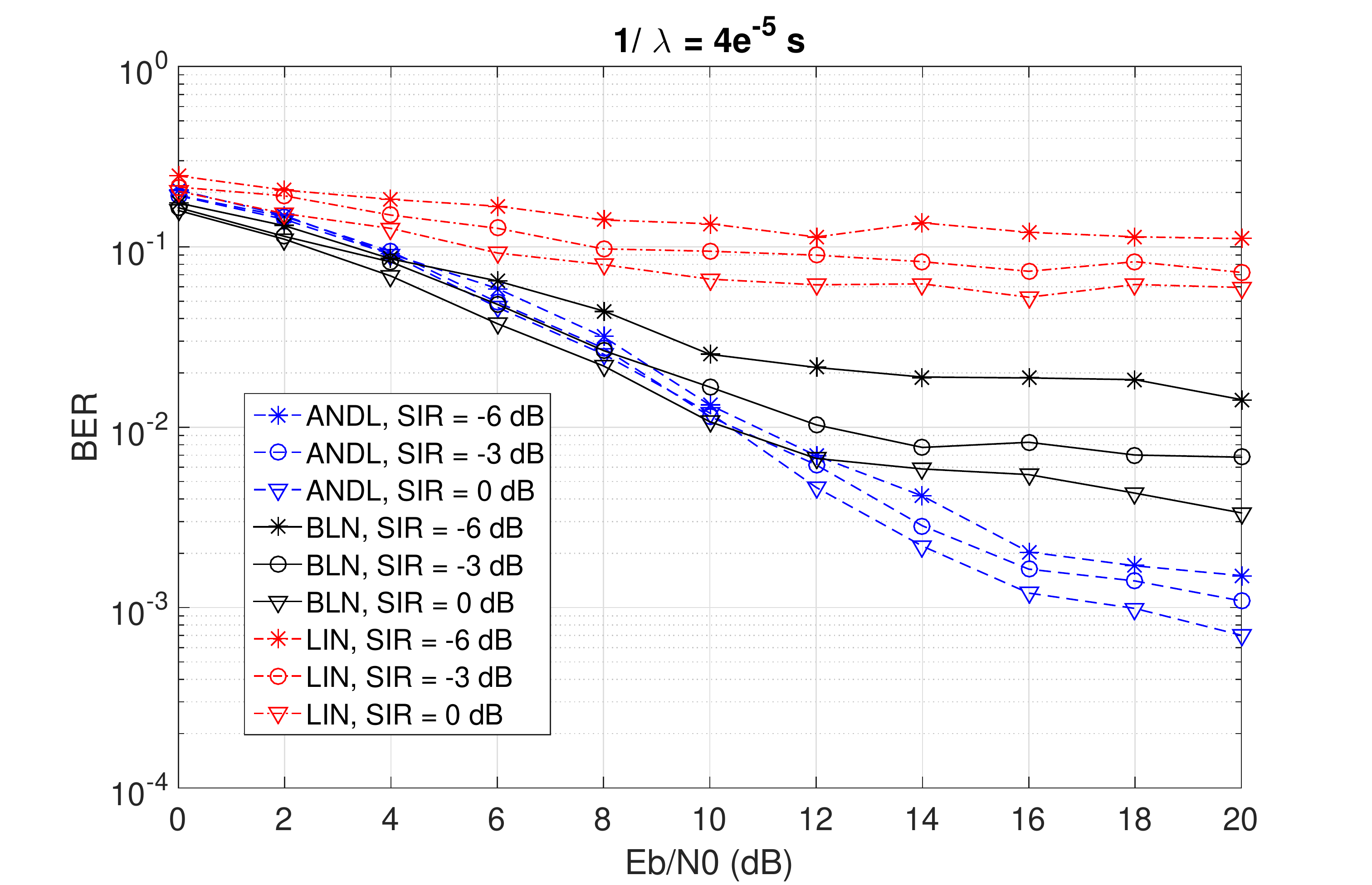}
\caption{BER versus SNR with fixed SIR.}
\label{fig:BER vs SNR, fixed SIR}
\end{figure}

Fig. \ref{fig:PSD} shows the power spectral density (PSD) for a given signal to thermal plus impulsive noise ratio (SINR) after impulsive noise mitigation filter. It is evident that we have significant impulsive noise suppression in passband with the ANDL compared to the suppression offered by a linear filter. This figure also shows that when there is no impulsive noise, the ANDL does not distort our desired signal in the passband. This disproportional effect of ANDL over the impulsive noise and desired signal in the passband results in significant SNR improvement at the receiver.

To demonstrate the robustness of the ANDL to different types of impulsive noise, we consider the case when both asynchronous and cyclostationary impulsive noise impact the signal simultaneously. The BER performance of proposed approach for different values of SIR versus SNR is shown in Fig. \ref{fig:BER vs SNR, fixed SIR}. We compare the ANDL performance with blanking and the optimal threshold for blanking is found based on an exhaustive numerical search. Fig. \ref{fig:BER vs SNR, fixed SIR} shows that the ANDL based reception results in better BER performance relative to blanking and linear filter especially in high SNR.
The BER performance of the system for a given SINR versus SNR is shown in Fig. \ref{fig:BER vs SNR, fixed SINR}. Since SINR is fixed, we have more impulsivity when thermal noise is low (i.e., high SNR region). Fig. \ref{fig:BER vs SNR, fixed SINR} shows that the performance of blanking and linear filter remains almost unchanged while the ANDL shows a significant improvement in high SNR region. This result highlights the effectiveness of the ANDL in severe impulsive noise environments.

The importance of choosing optimum resolution parameter $\alpha$ is shown in Fig. \ref{fig:alpha}. This figure shows the ANDL performance for different values of $\alpha$ for given amount of impulsive noise. We can see that the best performance is observed when $\alpha$ is selected based on lemma \ref{lem:lemma1}. As we deviate from this choice, the performance degradation is gradual and in many cases still superior to the linear filter performance (captured by setting $\alpha$ to a high value).

Finally, Fig. \ref{fig:AWGN} shows that ANDL, with proper selection of resolution parameter $\alpha$ based on lemma \ref{lem:lemma1} and sufficiently flat frequency response in passband, can be used as a general front end and operates as a linear filter when there is no impulsive noise. We achieve the theoretical AWGN performance indicating that our desired signal passes through the ANDL without any distortion as would be the case with a linear filter.

\begin{figure}[t]
\centering
\includegraphics[width=.5\textwidth,height=60mm]{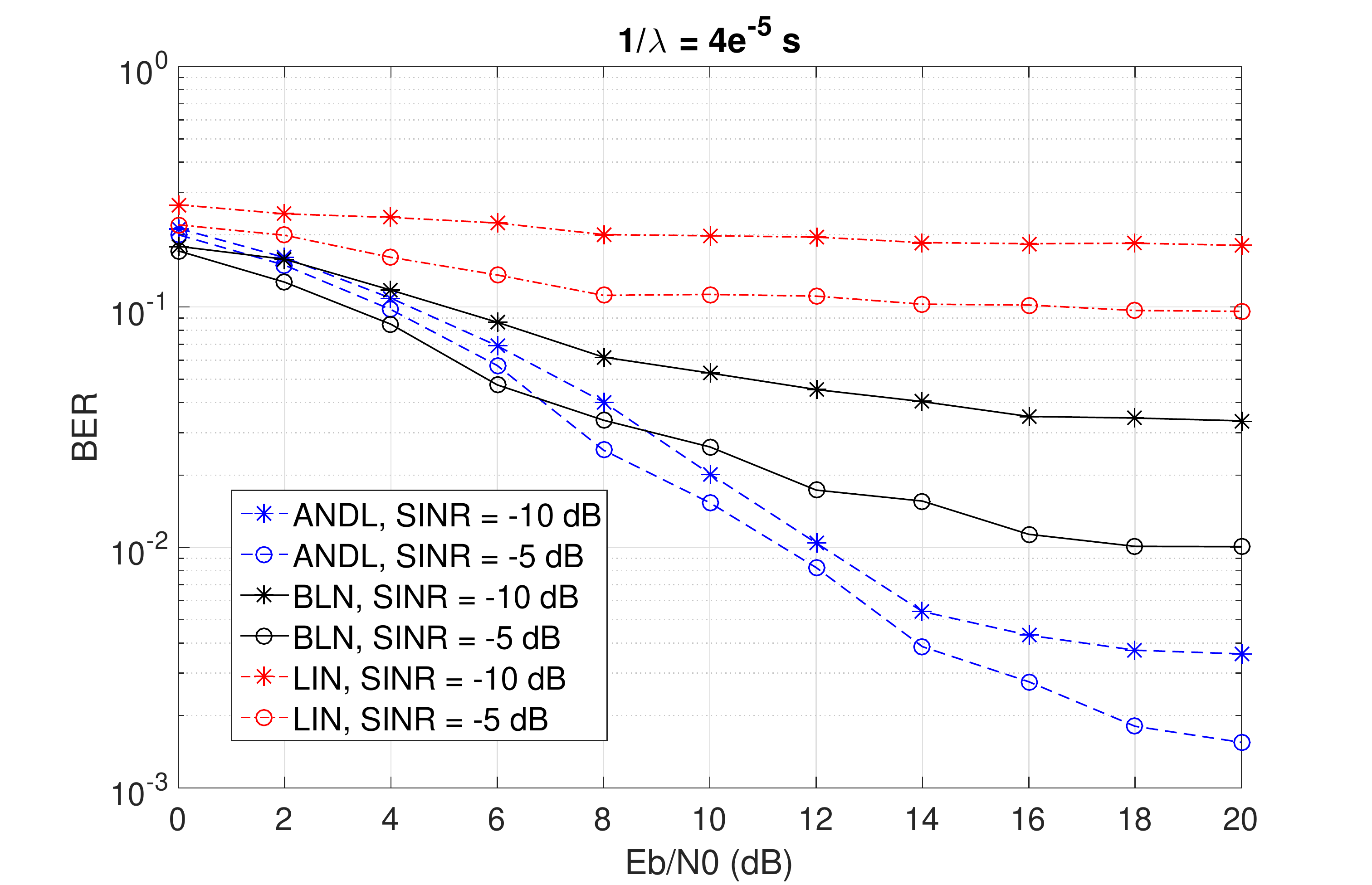}
\caption{BER versus SNR with fixed SINR.}
\label{fig:BER vs SNR, fixed SINR}
\end{figure}
\begin{figure}[t]
\centering
\includegraphics[width=.5\textwidth,height=60mm]{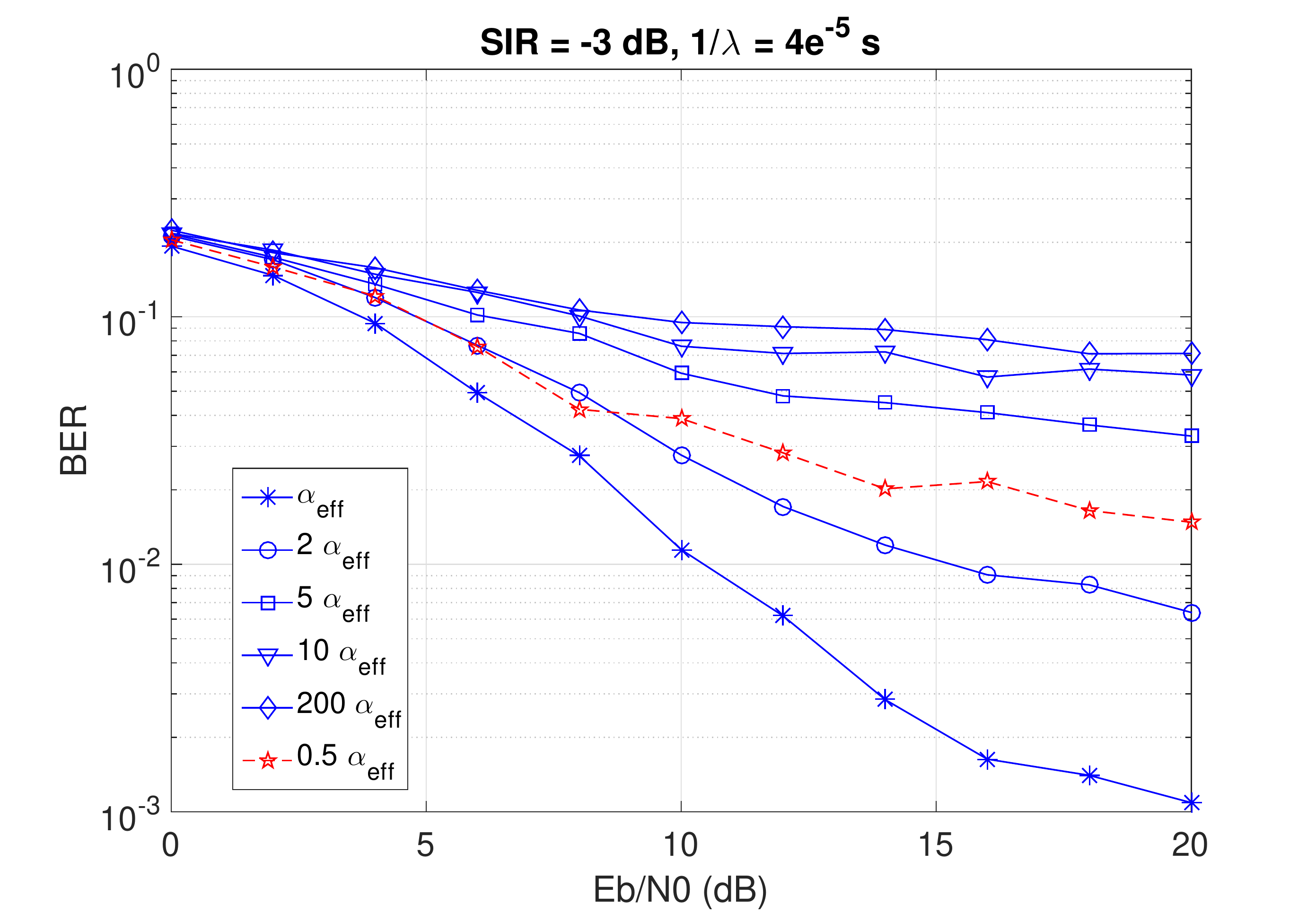}
\caption{Effect of resolution parameter on ANDL performance.}
\label{fig:alpha}
\end{figure}
\begin{figure}[t]
\centering
\includegraphics[width=.5\textwidth,height=60mm]{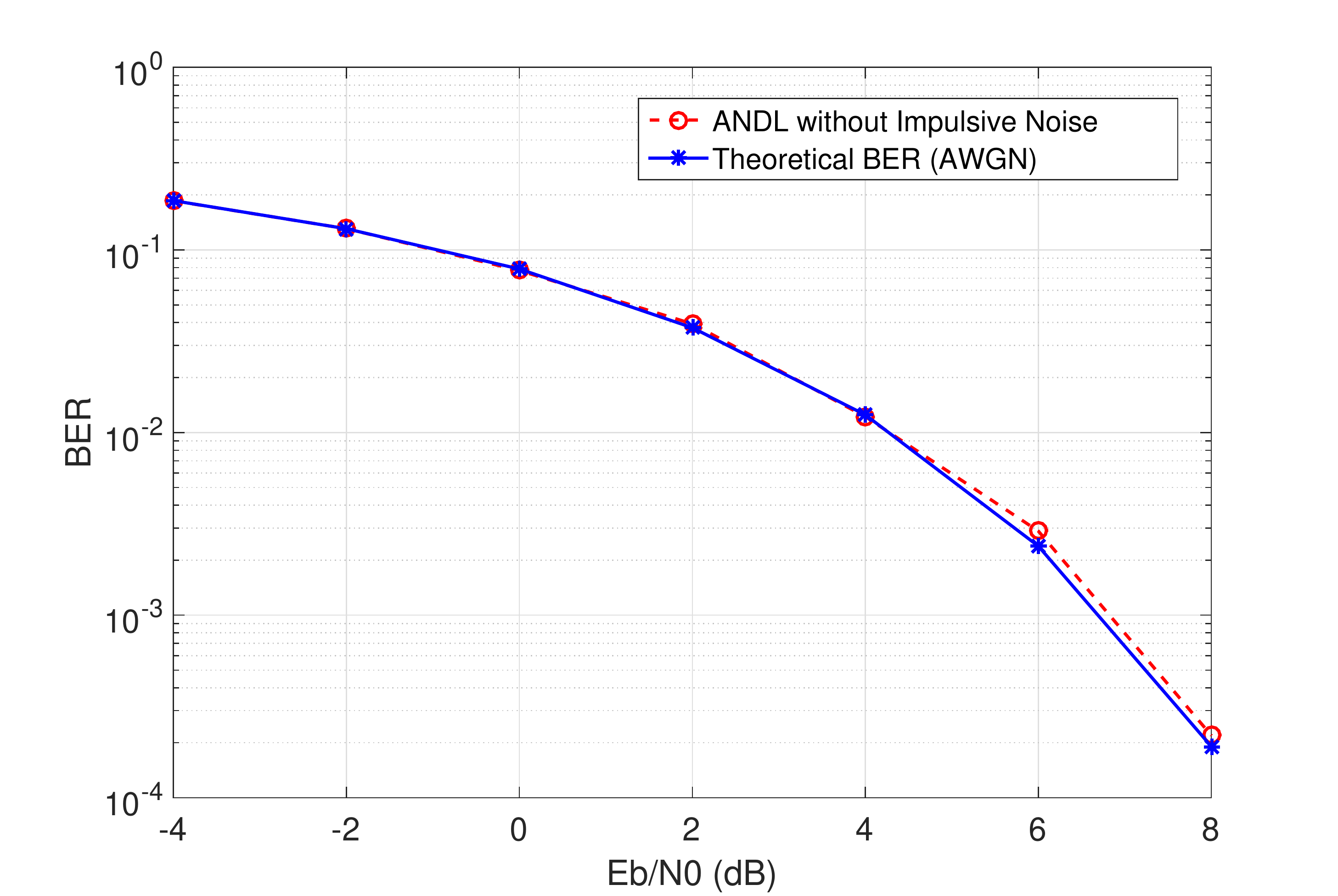}
\caption{ANDL performance without impulsive noise.}
\label{fig:AWGN}
\end{figure}


\section{Conclusion}

In this study, a blind adaptive analog nonlinear filter, referred to as Adaptive Nonlinear Differential Limiter (ANDL) is proposed to mitigate asynchronous and cyclostationary impulsive noises in OFDM-based PLC receiver. In addition, a practical method to find an effective value for the resolution parameter of ANDL is presented. We demonstrate the ability of ANDL to significantly reduce the PSD of impulsive noise in the signal passband without having prior knowledge of the statistical noise model or model parameters. The results show that ANDL can provide improvement in the overall signal quality ranging from distortionless behavior for low impulsive noise conditions to significant improvement in BER performance in the presence of strong impulsive component. It also has been shown that the performance of ANDL can be enhanced by careful selection of resolution parameter. It is important to note that ANDL can be deployed either as a stand-alone low-cost real-time solution for impulsive noise mitigation, or combined with other interference reduction techniques.

\bibliographystyle{IEEEtran}

\bibliography{IEEEabrv,Reference}

\end{document}